\newtheorem{theorem}{Theorem}
\newtheorem{lemma}{Lemma}
\begin{document}

\baselineskip=24pt

\begin{center} {\Large{\textbf {Variable-width confidence intervals in Gaussian regression and penalized maximum
likelihood estimators}}}
\end{center}

\medskip

\smallskip


\begin{center}
\large{Davide Farchione and Paul Kabaila$^*$}
\end{center}

\begin{center}
{\it Department of Mathematics and Statistics, La Trobe University,  Australia}
\end{center}

\vspace{13cm}

\noindent $^*$ Author to whom correspondence should be addressed.
Department of Mathematics and Statistics, La Trobe University,
Victoria 3086, Australia.
Tel.: +61 3 9479 2594, Fax:
 +61 3 9479 2466,
 {E-mail:} P.Kabaila@latrobe.edu.au

\newpage

\begin{center}
{\large ABSTRACT}
\end{center}

\smallskip

\noindent Hard thresholding, LASSO , adaptive LASSO and SCAD
point estimators have been suggested for use in the linear regression context when
most of the components of the regression parameter vector are believed to be zero,
a sparsity type of assumption.
P\"otscher and Schneider, 2010, {\sl Electronic Journal of Statistics},
have considered the properties of
fixed-width confidence intervals that include one of these point estimators
(for all possible data values).
They consider a normal linear regression model with orthogonal regressors and show that these confidence
intervals are longer than the standard confidence interval
(based on the maximum likelihood estimator) when the tuning parameter
for these point estimators is chosen to lead to either conservative
or consistent model selection.
We extend this analysis to the case of {\sl variable-width} confidence intervals
that include one of these point estimators
(for all possible data values). In consonance with these findings of P\"otscher and Schneider,
we find that these confidence intervals
perform poorly by comparison with the standard confidence interval,
when the tuning parameter for these point estimators
is chosen to lead to consistent model selection.
However, when the tuning parameter for these point estimators is chosen to lead to conservative
model selection, our conclusions differ from those of P\"otscher and Schneider.
 We consider the variable-width
confidence intervals of Farchione and Kabaila, 2008, {\sl Statistics
\& Probability Letters}, which have advantages over the standard confidence
interval in the context that there is a belief in a sparsity type of assumption.
These variable-width confidence intervals
are shown to include the hard thresholding, LASSO, adaptive LASSO
and SCAD estimators (for all possible data values) provided that
the tuning parameters for these estimators are chosen to
belong to an appropriate interval.

\newpage

\section{Introduction}

Hard-thresholding, LASSO (Tibshirani \cite{Tibshirani}), adaptive LASSO (Zou \cite{Zou}) and SCAD
(Fan and Li \cite{Fan_Li})
point estimators have been suggested for use in the linear regression context when
most of the components of the regression parameter vector are believed to be zero,
a sparsity type of assumption.
P\"otscher and Schneider \cite{Potscher_Schneider} ask to what extent these
point estimators can be used as the basis for confidence intervals for these
components. They consider the properties of
fixed-width confidence intervals that are constrained to include one of these
point estimators (for all possible data values).
They do this in the context of a normal linear regression model with orthogonal regressors
for both the case that (a) the error variance is assumed known and (b)
the error variance is estimated by the usual unbiased estimator
obtained by fitting the full model to the data.
P\"otscher and Schneider \cite{Potscher_Schneider} show that these confidence
intervals are longer than the standard confidence interval
based on the maximum likelihood estimator, when the tuning parameter
for these point estimators is chosen to lead to either conservative
or consistent model selection. By consistent model selection, we mean
that the selected model is the true model with probability approaching
1 as $n \rightarrow \infty$, where $n$ denotes the dimension of the
response vector. By conservative model selection, we mean
a model selection that (a) is not consistent and (b) is such
that the selected model includes the true model with probability approaching
1 as $n \rightarrow \infty$.

To what extent are these findings due to the requirement that these confidence
intervals have fixed widths? A variable-width confidence interval
based on a given point estimator has the property that this
confidence interval includes this point estimator, for all possible
data values. We first consider the case that the tuning parameter
for these point estimators is chosen to lead to consistent model selection.
In Section 3, we present a new result that shows that variable-width confidence
intervals that include one of these point estimators (for all possible data
values) must perform poorly by comparison with the standard confidence interval.
In this case, our conclusions are similar to those in \cite{Potscher_Schneider}.
This is perhaps not surprising, given the results of Kabaila \cite{Kabaila}
and P\"otscher \cite{Potscher}.

Next, we consider the case that the tuning parameter for these point estimators
is chosen to lead to conservative model selection.
P\"otscher and Schneider \cite{Potscher_Schneider} find that
fixed-width confidence intervals that are constrained to include one of these
point estimators (for all possible data values) are longer than the standard
confidence interval.
This may be interpreted as a negative finding for these point estimators.
Yet, these point estimators have some very attractive features.
Figure 9 of \cite{Tibshirani} shows contours of constant value of $|\beta_1|^q + |\beta_2|^q$
for $q=4, 2, 1, 0.5$ and 0.1. As
Tibshirani \cite{Tibshirani} states, ``The lasso corresponds to $q=1$.''
and ``The value $q=1$ has the advantage of being closer to subset selection
than is ridge regression ($q=2$) and is also the smallest value of $q$
giving a convex region.''. The LASSO estimator has the attractive feature that
it is a continuous function of the data. Like the LASSO, the
adaptive LASSO and the SCAD estimators use a thresholding rule
that sets estimated coefficients with small magnitudes to zero. The adaptive LASSO and
the SCAD estimators also have the attractive features that (a) they are
continuous functions of the data and (b) they are nearly unbiased when
the true unknown parameter has large magnitude (\cite{Fan_Li}, \cite{Zou}).
How do we resolve the apparent conflict between the findings of
\cite{Potscher_Schneider} and the existence of these
very attractive features? We show that this finding can be explained (at least in part)
by the requirement in \cite{Potscher_Schneider} that the confidence intervals have
fixed widths.

Following \cite{Potscher_Schneider}, we consider a normal linear regression
model with orthogonal regressors for both the case that (a) the error variance is assumed known and (b)
the error variance is estimated by the usual unbiased estimator
obtained by fitting the full model to the data.
It is plausible that the case that the error variance is known
amounts essentially to the assumption that the error variance is estimated with
great accuracy. In Appendix B, we provide a precise motivation for considering
the known error variance case. In Section 4, we
consider the variable-width
confidence intervals of Farchione and Kabaila \cite{Farchione_Kabaila},
in the known error variance case.
These confidence intervals are shown to
have advantages over the standard confidence
interval when there is a belief in a sparsity type of assumption.
These variable-width confidence intervals
are shown to include the hard-thresholding, LASSO, adaptive LASSO
and SCAD estimators (for all possible data values) provided that
the tuning parameters for these estimators are chosen to
belong to an appropriate interval. In Section 5, we consider
the extension of these results
to the case that the error variance is estimated by the usual unbiased estimator
obtained by fitting the full model to the data.

\section{The model and the point estimators considered}

We consider a normal linear regression model with orthogonal regressors.
As pointed out in \cite{Potscher_Schneider}, without loss of generality we may
suppose that the data $Y_1, \ldots, Y_n$ are independent and identically $N(\theta, \sigma^2)$
distributed,
where $\theta \in \mathbb{R}$ and $\sigma > 0$. We use lower case to denote the observed
value of a random variable.
We also use a similar notation to that used in \cite{Potscher_Schneider} for the
hard thresholding, LASSO and adaptive LASSO estimators. Namely, the hard thresholding
estimator $\tilde{\Theta}_H$ is given by
\begin{equation*}
\tilde{\Theta}_H = \bar{Y} \, \textbf{1}(|\bar{Y}| > \hat{\Sigma} \eta_n) =
\begin{cases}
0 &\text{if } |\bar{Y}| \le \hat{\Sigma} \eta_n \\
\bar{Y}       &\text{if } |\bar{Y}| > \hat{\Sigma} \eta_n
\end{cases}
\end{equation*}
where the tuning parameter $\eta_n$ is a positive real number, $\bar{Y}= n^{-1} \sum_{i=1}^n Y_i$
and $\hat{\Sigma}^2 = (n-1)^{-1}\sum_{i=1}^n (Y_i - \bar{Y})^2$. The LASSO estimator
$\tilde{\Theta}_S$ is given by
\begin{equation*}
\tilde{\Theta}_S = \text{sign}(\bar{Y}) \, \big(|\bar{Y}| > \hat{\Sigma} \eta_n \big)_+ =
\begin{cases}
- \max\{|\bar{Y}| - \hat{\Sigma} \eta_n, 0\} &\text{if } \bar{Y}<0 \\
0 &\text{if } \bar{Y}=0 \\
\max\{|\bar{Y}| - \hat{\Sigma} \eta_n, 0\} &\text{if } \bar{Y}>0
\end{cases}
\end{equation*}
where sign$(x)$ is equal to $-1$ for $x<0$, 0 for $x=0$ and 1 for $x>0$ and
$x_+ = \max\{x,0\}$. The adaptive LASSO estimator $\tilde{\Theta}_A$ is given by
\begin{equation*}
\tilde{\Theta}_A = \bar{Y} \, \big(1 - \hat{\Sigma}^2 \eta_n^2/\bar{Y}^2 \big)_+ =
\begin{cases}
0 &\text{if } |\bar{Y}| \le \hat{\Sigma} \eta_n \\
\bar{Y} - \displaystyle{\frac{\hat{\Sigma}^2 \eta_n^2}{\bar{Y}}} &\text{if } |\bar{Y}| > \hat{\Sigma} \eta_n
\end{cases}
\end{equation*}
We also consider the following SCAD estimator $\tilde{\Theta}_C$
\begin{equation*}
\tilde{\Theta}_C =
\begin{cases}
\text{sign}(\bar{Y}) \, \big(|\bar{Y}| - \hat{\Sigma} \eta_n \big)_+
&\text{if } |\bar{Y}| \le 2 \hat{\Sigma} \eta_n \\
((a-1) \bar{Y} - \text{sign}(\bar{Y}) a \hat{\Sigma} \eta_n)/(a-2)
&\text{if }  2 \hat{\Sigma} \eta_n < |\bar{Y}| \le a \hat{\Sigma} \eta_n \\
\bar{Y}
&\text{if }  |\bar{Y}| > a \hat{\Sigma} \eta_n
\end{cases}
\end{equation*}
where $a=3.7$ (see p.1351 of \cite{Fan_Li} for a motivation for this choice of $a$).

\section{Variable-width confidence intervals based on the point estimators
when the tuning parameter is chosen for consistent model selection}

In this section, we suppose that $\eta_n \rightarrow 0$ and $\sqrt{n} \, \eta_n \rightarrow \infty$,
as $n \rightarrow \infty$.
In other words, we suppose that the tuning parameter $\eta_n$ is chosen so as to lead to consistent
model selection. In this case, for example, the probability that $\tilde{\Theta}_H$ is equal to 0
approaches 1 for $\theta=0$, whilst $\tilde{\Theta}_H$ converges in probability to
$\theta$ for $\theta \ne 0$ (as $n \rightarrow \infty$).
For clarity, in this section we will use the subscript $n$ to make explicit a
dependence on $n$. Let $\tilde{\theta}_n(\bar{y}_n, \hat{\sigma}_n)$ denote a point estimate of
$\theta$ that satisfies the condition that if $|\bar{y}_n| \le \hat{\sigma}_n \eta_n$ then
$\tilde{\theta}_n(\bar{y}_n, \hat{\sigma}_n)=0$. The estimates $\tilde{\theta}_H$,
$\tilde{\theta}_S$ and $\tilde{\theta}_A$ satisfy this condition.
With a small change of notation,
the estimate $\tilde{\theta}_C$ also satisfies this condition.
The standard $1-\alpha$ confidence interval for $\theta$ is
\begin{equation*}
J_n = \Big [ \bar{Y}_n - t(n-1) \hat{\Sigma}_n / \sqrt{n}, \, \bar{Y}_n + t(n-1) \hat{\Sigma}_n / \sqrt{n} \Big ]
\end{equation*}
where the quantile $t(m)$ is defined by the requirement that $P \big(-t(m) \le T \le t(m) \big) = 1 - \alpha$
for $T \sim t_m$.

A variable-width confidence interval
based on the point estimate $\tilde{\theta}_n(\bar{y}_n, \hat{\sigma}_n)$
has the property that this
confidence interval includes this point estimate, for all
possible data values.
Consider the confidence interval
\begin{equation*}
D_n(\bar{Y}_n, \hat{\Sigma}_n)
= \big [ \ell_n(\bar{Y}_n, \hat{\Sigma}_n), \, u_n(\bar{Y}_n, \hat{\Sigma}_n) \big]
\end{equation*}
for $\theta$, that is required to satisfy the following conditions for all $n$:

\begin{enumerate}

\item[(a)]  $\tilde{\theta}_n(\bar{y}_n, \hat{\sigma}_n) \in D_n(\bar{y}_n, \hat{\sigma}_n)$
for all $(\bar{y}_n, \hat{\sigma}_n) \in \mathbb{R} \times (0, \infty)$. In other words,
the confidence interval $D_n$ contains the estimate $\tilde{\theta}_n$, for all possible
data values.

\item[(b)]  $P_{\theta, \sigma} \big ( \theta \in D_n(\bar{Y}_n, \hat{\Sigma}_n) \big)
\ge 1 - \alpha$ for all $(\theta, \sigma) \in \mathbb{R} \times (0, \infty)$. In other words,
$D_n$ is a $1-\alpha$ confidence interval for $\theta$.

\end{enumerate}

\noindent The following result shows that this confidence
interval performs very poorly by comparison with $J_n$,
the standard $1-\alpha$ confidence interval for $\theta$.

\begin{theorem}

Let $\theta_n = \sigma \, \eta_n/2$. For each $\sigma \in (0, \infty)$,
\begin{equation*}
\frac{E_{\theta_n,\sigma} \big(\text{length of } D_n(\bar{Y}_n, \hat{\Sigma}_n) \big)}
{E_{\theta_n,\sigma} \big(\text{length of standard } 1-\alpha \text{ confidence interval } J_n \big)}
\rightarrow \infty
\end{equation*}
as $n \rightarrow \infty$.

\end{theorem}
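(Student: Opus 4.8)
The plan is to bound the numerator from below and the denominator from above separately, and then show that their ratio is, up to a factor converging to a positive constant, equal to $\sqrt{n}\,\eta_n$, which tends to infinity by hypothesis.

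First I would dispose of the denominator. The length of $J_n$ equals $2\,t(n-1)\,\hat{\Sigma}_n/\sqrt{n}$, so its expectation is $2\,t(n-1)\,E_{\theta_n,\sigma}(\hat{\Sigma}_n)/\sqrt{n}$. Since $\hat{\Sigma}_n$ is a function of the residuals $Y_i-\bar{Y}_n$ alone, its distribution does not depend on $\theta$, so $E_{\theta_n,\sigma}(\hat{\Sigma}_n)=E_{0,\sigma}(\hat{\Sigma}_n)$, which is bounded by $\sigma$ and converges to $\sigma$ as $n\to\infty$. As $t(n-1)$ converges to $z_{1-\alpha/2}$, the $1-\alpha/2$ quantile of the standard normal distribution, the denominator is of exact order $n^{-1/2}$.

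Next, the numerator. The key idea is that $D_n$ is forced to contain both the point $0$ and the point $\theta_n$ simultaneously with probability close to $1-\alpha$, and since these two points are a distance $\theta_n=\sigma\eta_n/2$ apart, the length of $D_n$ must then be at least $\sigma\eta_n/2$. Define the events $A_n=\{|\bar{Y}_n|\le\hat{\Sigma}_n\eta_n\}$ and $C_n=\{\theta_n\in D_n\}$. On $A_n$, condition (a) gives $\tilde{\theta}_n=0\in D_n=[\ell_n,u_n]$, hence $\ell_n\le 0$; on $C_n$ we have $u_n\ge\theta_n$; so on $A_n\cap C_n$ the length satisfies $u_n-\ell_n\ge\theta_n=\sigma\eta_n/2$. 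Condition (b) gives $P_{\theta_n,\sigma}(C_n)\ge 1-\alpha$, so by Bonferroni $P_{\theta_n,\sigma}(A_n\cap C_n)\ge P_{\theta_n,\sigma}(A_n)-\alpha$. Hence
\[
E_{\theta_n,\sigma}(\text{length of } D_n)\ \ge\ \frac{\sigma\eta_n}{2}\,\big(P_{\theta_n,\sigma}(A_n)-\alpha\big).
\]

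The main obstacle is to show $P_{\theta_n,\sigma}(A_n)\to 1$, and this is precisely where both hypotheses are used. Writing $\bar{Y}_n=\theta_n+(\sigma/\sqrt{n})Z$ with $Z\sim N(0,1)$ independent of $\hat{\Sigma}_n$, and recalling $\theta_n=\sigma\eta_n/2$, the complement $A_n^{\,c}$ is contained in $\{Z>\sqrt{n}\,\eta_n(\hat{\Sigma}_n-\sigma/2)/\sigma\}\cup\{Z<-\sqrt{n}\,\eta_n(\hat{\Sigma}_n+\sigma/2)/\sigma\}$. Since $\hat{\Sigma}_n\to\sigma$ in probability (consistency) and $\sqrt{n}\,\eta_n\to\infty$, the first threshold diverges to $+\infty$ and the second to $-\infty$ in probability; conditioning on $\hat{\Sigma}_n$ and applying dominated convergence to $E[1-\Phi(\cdot)]$, where $\Phi$ is the standard normal distribution function, gives $P_{\theta_n,\sigma}(A_n^{\,c})\to 0$. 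Combining this with the two displayed bounds yields
\[
\frac{E_{\theta_n,\sigma}(\text{length of } D_n)}{E_{\theta_n,\sigma}(\text{length of } J_n)}\ \ge\ \big(\sqrt{n}\,\eta_n\big)\cdot\frac{P_{\theta_n,\sigma}(A_n)-\alpha}{4\,t(n-1)\,E_{\theta_n,\sigma}(\hat{\Sigma}_n)/\sigma},
\]
where the second factor converges to the positive constant $(1-\alpha)/(4\,z_{1-\alpha/2})$ while $\sqrt{n}\,\eta_n\to\infty$, so the ratio diverges, as claimed.
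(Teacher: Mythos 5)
Your proposal is correct and follows essentially the same route as the paper's Appendix A proof: lower-bound the length of $D_n$ by $\theta_n=\sigma\eta_n/2$ on the event that $D_n$ contains both $0$ and $\theta_n$ (which has probability at least $P_{\theta_n,\sigma}(A_n)-\alpha$, bounded away from zero once $P_{\theta_n,\sigma}(A_n)\to 1$), and compare with the denominator of order $n^{-1/2}$. Your verification that $P_{\theta_n,\sigma}(A_n^c)\to 0$ via diverging normal thresholds is a minor cosmetic variant of the paper's Lemma 1, which instead exhibits the explicit subset $\{\hat{\Sigma}_n/\sigma>3/4\}\cap\{|X_n-\tfrac{1}{2}\sqrt{n}\,\eta_n|\le\tfrac{1}{4}\sqrt{n}\,\eta_n\}\subset A_n$.
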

\noindent The proof of this theorem is presented in Appendix A.

\section{Variable-width confidence intervals of Farchione and Kabaila when
the error variance is known}

Consider the ``known error variance case''. The motivation for considering
this case is given in Appendix B. Suppose that $\sigma^2$ is known. Consider the $1-\alpha$ confidence interval
for $\theta$, put forward by Farchione and Kabaila \cite{Farchione_Kabaila},
that has the form
\begin{equation}
\label{CI_C}
C = \left [ -\frac{\sigma}{\sqrt{n}} b \left ( -\frac{\bar{Y}}{\sigma/\sqrt{n}} \right ), \,
\frac{\sigma}{\sqrt{n}} b \left ( \frac{\bar{Y}}{\sigma/\sqrt{n}} \right ) \right ]
\end{equation}
where the function $b$ satisfies $b(x) \ge -b(-x)$ for all $x \in \mathbb{R}$.
This constraint is required to ensure that the upper endpoint of this confidence interval is never less
than the lower endpoint. This particular form of confidence interval is motivated by the invariance arguments
presented in Section 4 of \cite{Farchione_Kabaila}. The standard $1-\alpha$ confidence interval for
$\theta$ is $I = \big[\bar{Y} - z \, \sigma / \sqrt{n}, \, \bar{Y} + z \, \sigma / \sqrt{n} \big]$,
where the quantile $z$ is defined by the requirement that $P(-z \le Z \le z) = 1 - \alpha$
for $Z \sim N(0,1)$. Note that this confidence interval can be expressed in the form $C$.

The coverage probability and expected length properties of the confidence interval $C$ are conveniently
examined by applying the same change of scale (by multiplying by $\sqrt{n}/\sigma$) to the parameter
$\theta$, the estimator $\bar{Y}$, the confidence interval $C$ and the standard confidence interval $I$.
Define $\psi = (\sqrt{n}/\sigma) \theta$, $X = (\sqrt{n}/\sigma) \bar{Y}$,
\begin{equation}
\label{CI_Cstar}
C^* = \frac{\sqrt{n}}{\sigma} C = [-b(-X), b(X)],
\end{equation}
and $I^* = (\sqrt{n}/\sigma) I = [X-z, X+z]$.
Note that $X \sim N(\psi, 1)$.
We consider $C^*$ to be a confidence interval for
$\psi$, based on $X$. The standard $1-\alpha$ confidence interval for $\psi$ (based on $X$) is
$I^*$. Note that $P_{\theta,\sigma} (\theta \in C) = P_{\psi}(\psi \in C^*)$
and
\begin{equation*}
\frac{E_{\theta, \sigma} (\text{length of } C)}{\text{length of } I}
= \frac{E_{\psi} (\text{length of } C^*)}{\text{length of } I^*},
\end{equation*}
for $\psi= (\sqrt{n}/\sigma) \theta$.

Following \cite{Farchione_Kabaila}, we
assess $C^*$, for parameter value $\psi$, using the relative efficiency
\begin{equation*}
e(\psi) = \left ( \frac{E_{\psi} (\text{length of } C^*)}{\text{length of } I^*} \right )^2
= \left (  \frac{E_{\psi}(\text{length of } C^*)}{2 z} \right )^2.
\end{equation*}
This is a measure of the efficiency of the standard $1-\alpha$ confidence interval $I^*$
by comparison with the efficiency of the $1-\alpha$ confidence interval $C^*$.
The relative efficiency $e(\psi)$ is the ratio (sample size used for $C^*$)/(sample size
used for $I^*$) such that $E_{\psi} (\text{length of } C^*) = \text{length of } I^*$
(cf p.555 of \cite{Pratt}).
Farchione and Kabaila \cite{Farchione_Kabaila} use the methodology of Pratt \cite{Pratt},
with a new weight function determined by a parameter $w$,
to find a confidence interval $C^*$ such that $e(0)$ is minimized, while ensuring that
$\max_{\psi} e(\psi)$ is not too large. In other words, if $\psi$ happens to be 0 then
$C^*$ performs better than the standard $1-\alpha$ confidence interval $I^*$. On the other hand,
if $\psi \ne 0$ then the worst possible performance of $C^*$ is $\max_{\psi} e(\psi)$,
which is not too large.
In addition, this confidence interval has endpoints that approach
the endpoints of the standard $1-\alpha$ confidence interval $I^*$ as $|x| \rightarrow \infty$.
This implies that $e(\psi) \rightarrow 1$ as $|\psi| \rightarrow \infty$.
We have chosen $w = 0.1$ and $1-\alpha=0.95$. The coverage probability $P_{\psi}(\psi \in C^*)$ is
0.95 for all $\psi$. The relative efficiency $e(\psi)$ of $C^*$ for this case is shown in Figure 1.
For comparison, the 0.95 confidence interval described on p.555 of \cite{Pratt} has
relative efficiency 0.72 at $\psi=0$. This, however, comes at the very high cost of the relative
efficiency diverging to $\infty$ as $|\psi| \rightarrow \infty$.

\FloatBarrier

\begin{figure}[h]\hspace{30mm}
    \scalebox{0.35}{\includegraphics[]{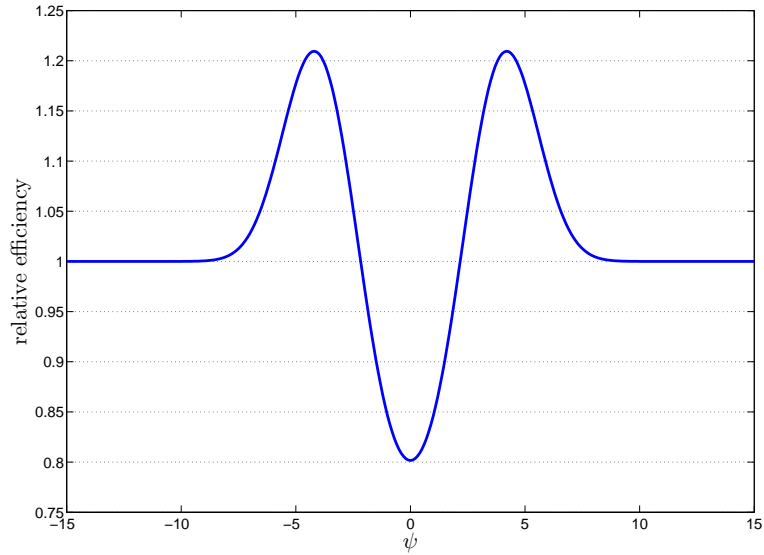}}
    \caption{\textit{\small{Plot of the efficiency of the standard 95\% confidence interval
    by comparison with the Farchione and Kabaila 95\% confidence
    interval (for $w = 0.1$) as a function of $\psi$.}}}\label{}
\end{figure}

\FloatBarrier

We now consider the properties of the confidence interval $C^*$ in the context
that most of the components of the regression parameter vector are believed to be zero,
a sparsity type of assumption. Firstly, suppose that a large majority of the components of the
regression parameter vector are zero.
In this case, $C^*$ compares very favourably with the standard $1-\alpha$ confidence interval.
If $\psi=0$, corresponding
to one of the large majority of the components of the regression parameter vector that are zero, then
$e(\psi)$ is approximately 0.8. On the other hand, if $\psi \ne 0$,
corresponding to one of the small minority of components of the regression parameter
vector that are non-zero, then the maximum possible value of $e(\psi)$ is approximately 1.2.
Secondly, in the ``best of all possible worlds''
scenario that a large majority of the components of the regression parameter vector are zero and
the remaining components have large magnitudes, $C^*$ may be said to effectively
dominate the standard $1-\alpha$ confidence interval. If $\psi=0$, corresponding
to one of the large majority of the components of the regression parameter vector that is zero, then
$e(\psi)$ is approximately equal to 0.8. On the other hand, if $|\psi|$ is large,
corresponding to one of the small minority of the components of the regression parameter vector that has large
magnitude, then $e(\psi)$ is approximately equal to 1.
We conclude that $C^*$ has advantages over the standard $1-\alpha$ confidence
interval $I^*$ when a sparsity type of assumption holds.

\section{Variable-width confidence intervals based on the point estimators when
the tuning parameter is chosen for conservative model selection and the error variance is known}

In this section, we suppose that $\eta_n \rightarrow 0$.
We also suppose that there exists a positive integer $N$ and $a_{\ell}$ and $a_u$
(satisfying $0 < a_{\ell} < a_u < \infty$), such that
$\sqrt{n} \, \eta_n \in [a_{\ell}, a_u]$ for all $n > N$. This includes the particular case that
$\sqrt{n} \, \eta_n \rightarrow a$ ($0< a < \infty$),
as $n \rightarrow \infty$.
In other words, we suppose that the tuning parameter $\eta_n$ is chosen so as to lead to conservative
model selection.
We consider the ``known error variance case''. The motivation for considering
this case is given in Appendix B.

Suppose that $\sigma^2$ is known. We consider the conditions under which the point estimate
\begin{equation*}
\hat{\theta}_H =
\begin{cases}
0 &\text{if} \ \ |\bar{y}| \le \sigma \, \eta_n \\
\bar{y} &\text{if} \ \ |\bar{y}| > \sigma \, \eta_n
\end{cases}
\end{equation*}
of $\theta$ belongs in the confidence interval
$C$ (defined by \eqref{CI_C})
for all $(\bar{y}, \sigma) \in \mathbb{R} \times (0, \infty)$.
Define $\tau_n = \sqrt{n} \, \eta_n$.
As in Section 4, multiply the estimate $\hat{\theta}_H$ and the confidence interval $C$ by $\sqrt{n}/\sigma$,
to obtain
\begin{equation*}
\hat{\psi}_H = \frac{\sqrt{n}}{\sigma} \hat{\theta}_H =
\begin{cases}
0 &\text{if} \ \ |x| \le \tau_n \\
x  &\text{if} \ \ |x| > \tau_n
\end{cases}
\end{equation*}
and $C^* = (\sqrt{n}/\sigma) C$ (see \eqref{CI_Cstar}).
Obviously, $\hat{\theta}_H \in C$ for all $(\bar{y}, \sigma) \in \mathbb{R} \times (0,\infty)$
is equivalent to $\hat{\psi}_H \in C^*$ for all $x \in \mathbb{R}$.
There exists a positive number $c_H$ such that, for every $\tau_n \in (0, c_H]$, the following is true:
$\hat{\psi}_H \in C^*$ for all $x\in \mathbb{R}$. Similar statements hold for the other point estimates $\hat{\theta}_S$,
$\hat{\theta}_A$ and $\hat{\theta}_C$ (the corresponding estimators are defined towards the end of Appendix B).

Define $\hat{\psi}_S = (\sqrt{n}/\sigma) \hat{\theta}_S$,
$\hat{\psi}_A = (\sqrt{n}/\sigma) \hat{\theta}_A$ and
$\hat{\psi}_C = (\sqrt{n}/\sigma) \hat{\theta}_C$.
We have computed the maximum values of $\tau_n$ such that $\hat{\psi}_H$,  $\hat{\psi}_S$,
$\hat{\psi}_A$ and
$\hat{\psi}_C$ are in the interval $C^*$ (for all $x$). In each case this maximum value was found
to be 1.96. Figures 2 and 3 show the values of the estimator as a function of $x$ for this maximum value,
together with the endpoints of the confidence interval $C^*$ as functions of $x$.

\bigskip

\FloatBarrier

\begin{figure}[h]\hspace{22mm}
\scalebox{0.48}
    {\includegraphics[]{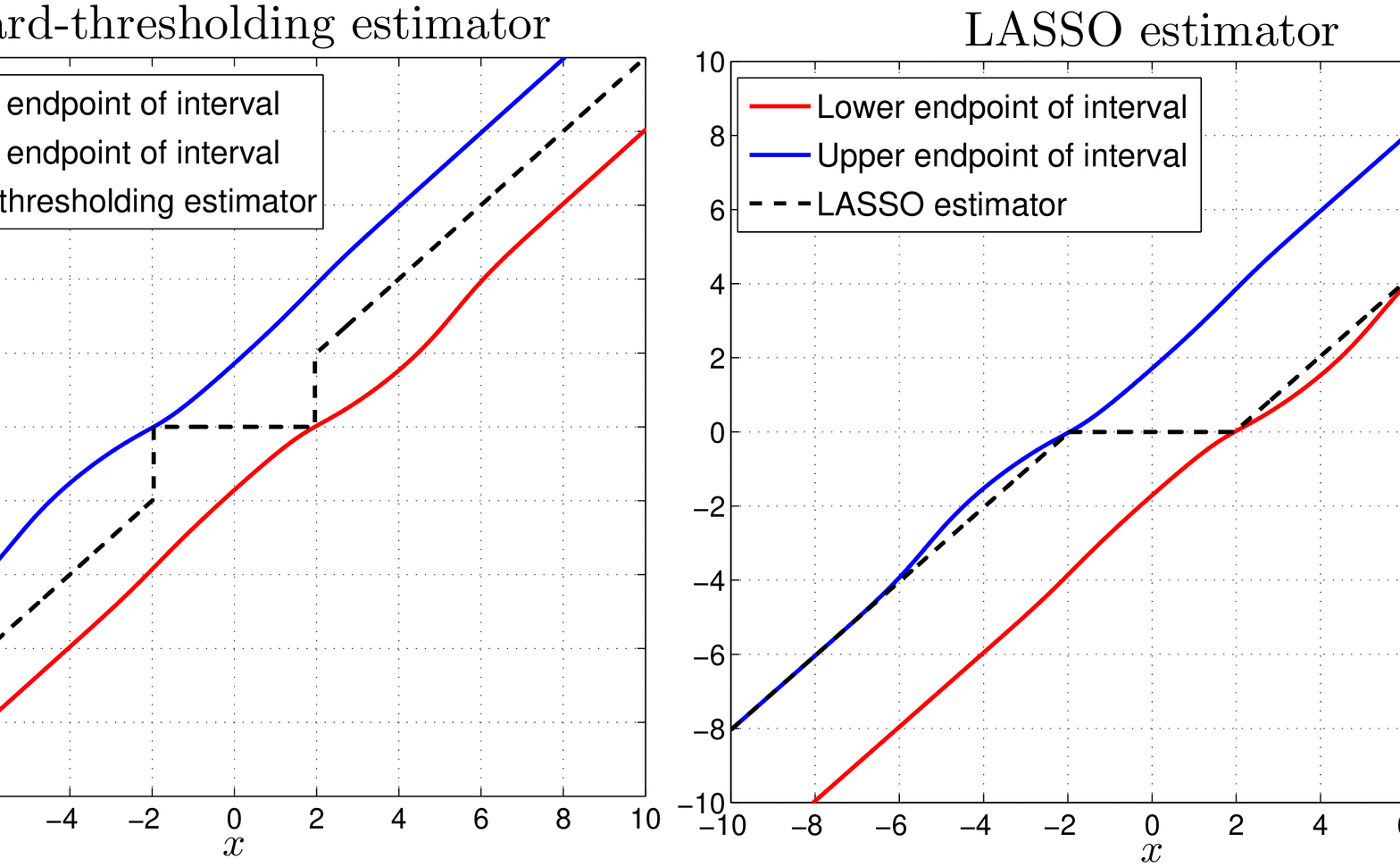}}
    \caption{\textit{\small{The left and right panels show the hard-thresholding estimate $\hat{\psi}_H$
    and the LASSO estimate $\hat{\psi}_S$, respectively, as functions of $x$ (for $\tau_{n} = 1.96$).
      Also shown, in both panels, is the
    Farchione and Kabaila 95\% confidence interval $C^*$ as a function of $x$ (for $w = 0.1$).}}}\label{}
\end{figure}

\FloatBarrier
\begin{figure}[h]\hspace{22mm}
    \scalebox{0.48}
    {\includegraphics[]{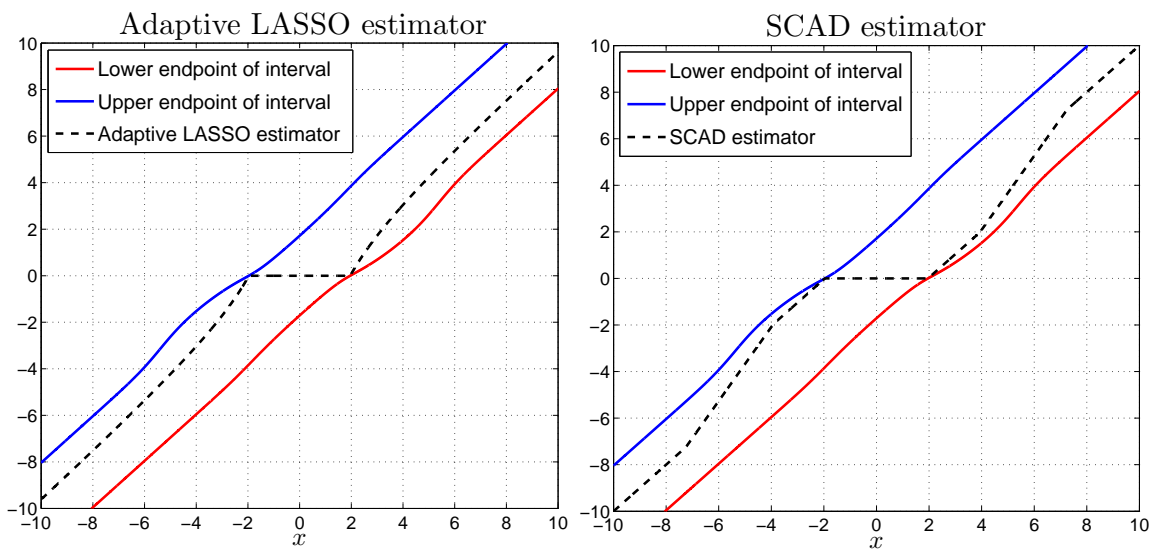}}
    \caption{\textit{\small{The left and right panels show the Adaptive LASSO estimate $\hat{\psi}_A$ (for $\tau_{n} = 1.96$)
    and the SCAD estimate $\hat{\psi}_S$ (for $\tau_{n} = 1.96$, $a = 3.7$), respectively, as functions of $x$ .
      Also shown, in both panels, is the
    Farchione and Kabaila 95\% confidence interval $C^*$ as a function of $x$ (for $w = 0.1$).
    .}}}\label{}
\end{figure}

\FloatBarrier

\section{Variable-width confidence intervals of Farchione and Kabaila and the point estimators when
the tuning parameter is chosen for conservative model selection and the error variance is unknown}

Suppose that the error variance $\sigma^2$ is unknown.
Consider the $1-\alpha$ confidence interval for $\theta$, put forward in Section 5 of \cite{Farchione_Kabaila},
that has the form
\begin{equation*}
D = \left [ -\frac{\hat{\Sigma}}{\sqrt{n}} \, b \left ( -\frac{\bar{Y}}{\hat{\Sigma}/\sqrt{n}} \right ), \,
\frac{\hat{\Sigma}}{\sqrt{n}} \, b \left ( \frac{\bar{Y}}{\hat{\Sigma}/\sqrt{n}} \right ) \right ]
\end{equation*}
where the function $b$ satisfies $b(x) \ge -b(-x)$ for all $x \in \mathbb{R}$.
This constraint is required to ensure that the upper endpoint of this confidence interval is never less
than the lower endpoint. This particular form of confidence interval can be motivated by invariance arguments
similar to those presented in Section 4 of \cite{Farchione_Kabaila}. The standard $1-\alpha$ confidence interval for
$\theta$ is
\begin{equation*}
J = \big [ \bar{Y} - t(n-1) \hat{\Sigma} / \sqrt{n}, \, \bar{Y} + t(n-1) \hat{\Sigma} / \sqrt{n} \big ]
\end{equation*}
where the quantile $t(m)$ is defined by the requirement that $P(-t(m) \le T \le t(m)) = 1 - \alpha$
for $T \sim t_m$. Note that this confidence interval can be expressed in the form $D$.

Define $R= \hat{\Sigma}/\sigma$.
The coverage probability and expected length properties of the confidence interval $D$ are conveniently
examined by applying the same change of scale (by multiplying by $\sqrt{n}/\sigma$) to the parameter
$\theta$, the estimator $\bar{Y}$, the confidence interval $D$ and the standard confidence interval $J$.
Define $\psi = (\sqrt{n}/\sigma) \theta$, $X = (\sqrt{n}/\sigma) \bar{Y}$,
\begin{equation*}
D^* = \frac{\sqrt{n}}{\sigma} D = \big [ -R \, b(-X/R), \, R \, b(X/R) \big ].
\end{equation*}
and $J^* = (\sqrt{n}/\sigma) J = [X - t(n-1) R, X + t(n-1) R]$.
Note that $X$ and $R$ are independent random variables and that
$X \sim N(\psi, 1)$.
As noted in Appendix B, the coverage probability and expected length properties
of $D$ are conveniently evaluated using the fact that
\begin{equation*}
P_{\theta, \sigma} (\theta \in D) = P_{\psi} (\psi \in D^*),
\end{equation*}
and
\begin{equation*}
\frac{E_{\theta, \sigma} (\text{length of } D)}{E_{\theta, \sigma} (\text{length of } J)}
= \frac{E_{\psi} ( \text{length of } D^*)}{E_{\theta, \sigma} (\text{length of } J^*)}
\end{equation*}
for $\psi= (\sqrt{n}/\sigma) \theta$.

Following \cite{Farchione_Kabaila}, we
assess $D^*$, for parameter value $\psi$, using the relative efficiency
\begin{equation*}
e(\psi) = \left ( \frac{E_{\psi} ( \text{length of } D^*)}{E_{\theta, \sigma} (\text{length of } J^*)} \right )^2
= \left (  \frac{E_{\psi}(\text{length of } D^*)}{2 t(n) E(R)} \right )^2.
\end{equation*}
This is a measure of the efficiency of the standard $1-\alpha$ confidence interval $J^*$
by comparison with the efficiency of the $1-\alpha$ confidence interval $D^*$.
Farchione and Kabaila \cite{Farchione_Kabaila} present (in Section 6) a computational methodology
with a weight function determined by a parameter $w$,
to find a confidence interval $D^*$ such that $e(0)$ is minimized, while ensuring that
$\max_{\psi} e(\psi)$ is not too large. In other words, if $\psi$ happens to be 0 then
$D^*$ performs better than the standard $1-\alpha$ confidence interval $J^*$. On the other hand,
if $\psi \ne 0$ then the worst possible performance of $D^*$ is $\max_{\psi} e(\psi)$,
which is not too large.
In addition, the confidence interval $D^*$ has endpoints that are the same as
the endpoints of the standard $1-\alpha$ confidence interval $J^*$ for sufficiently large $|X|/R$.
This implies that $e(\psi) \rightarrow 1$ as $|\psi| \rightarrow \infty$.
Farchione and Kabaila \cite{Farchione_Kabaila} found computationally that for the same choice of
parameter $w$, the confidence intervals $C^*$ and $D^*$ have similar
relative efficiencies (as function of $\psi$), provided that $n$ is not small. This
is illustrated by Figure 2 of \cite{Farchione_Kabaila}.
Theoretical support for this computational finding is provided by Theorem 2 of Appendix B
of the present paper.

As in Section 5, suppose that the tuning parameter $\eta_n$ is chosen so as to lead to conservative
model selection. We consider the conditions under which the point estimate
\begin{equation*}
\tilde{\theta}_H =
\begin{cases}
0 &\text{if} \ \ |\bar{y}| \le \hat{\sigma} \, \eta_n \\
\bar{y} &\text{if} \ \ |\bar{y}| > \hat{\sigma} \, \eta_n
\end{cases}
\end{equation*}
of $\theta$ belongs in the confidence interval $D$ (observed value)
for all $(\bar{y}, \hat{\sigma}) \in \mathbb{R} \times (0, \infty)$.
Define $\tau_n = \sqrt{n} \, \eta_n$.
Multiply the estimate $\tilde{\theta}_H$ and the confidence interval $D$ by $\sqrt{n}/\hat{\sigma}$,
to obtain
\begin{align*}
\tilde{\psi}_H &= \frac{\sqrt{n}}{\hat{\sigma}} \tilde{\theta}_H =
\begin{cases}
0 &\text{if} \ \ |\tilde{x}| \le \tau_n \\
\tilde{x}  &\text{if} \ \ |\tilde{x}| > \tau_n
\end{cases} \\
\\
\tilde{D} &= \frac{\sqrt{n}}{\hat{\sigma}} D = [-b(-\tilde{x}), b(\tilde{x})],
\end{align*}
where $\tilde{x} = (\sqrt{n}/\hat{\sigma}) \bar{y}$.
Obviously, $\tilde{\theta}_H \in D$ for all $(\bar{y}, \hat{\sigma}) \in \mathbb{R} \times (0, \infty)$ is equivalent to
$\tilde{\psi}_H \in \tilde{D}$ for all $\tilde{x}\in \mathbb{R}$.
There exists a positive number $\tilde{c}_H$ such that, for every $\tau_n \in (0, \tilde{c}_H]$, the following is true:
$\tilde{\psi}_H \in \tilde{D}$ for all $\tilde{x}$. Similar statements hold for the other point estimates $\tilde{\theta}_S$,
$\tilde{\theta}_A$ and $\tilde{\theta}_C$. As note earlier, the computational results of \cite{Farchione_Kabaila}
and Theorem 2 of Appendix B, suggest that (provided that $n$ is not small)
the situation here is very similar to that described in Section 5 and Figures 2 and 3.
In other words, we expect that $\tilde{c}_H \approx c_H$ (where $c_H$ is defined in Section 5),
provided that $n$ is not small.

\section{Conclusion}

The results of this paper confirm, yet again, that the hard-thresholding, LASSO, adaptive LASSO and SCAD
point estimators form a very poor foundation for confidence interval construction when the tuning parameter
for these estimators is chosen to lead to {\sl consistent} model selection.
However, the results of this paper do not, by any means, rule out the use of these point estimators as the
foundation for confidence interval construction when the tuning parameter
for these estimators is chosen to lead to {\sl conservative} model selection.

\appendix

\section{Proof of Theorem 1}

Define the event
$A_n = \big \{ |\bar{Y}_n| \le \hat{\Sigma}_n \eta_n \big \}$. By the law of total probability,
\begin{equation*}
P_{\theta, \sigma} \big ( \{ \theta \in D_n(\bar{Y}_n, \hat{\Sigma}_n) \} \cap A_n \big ) +
P_{\theta, \sigma} \big ( \{ \theta \in D_n(\bar{Y}_n, \hat{\Sigma}_n) \} \cap A_n^c \big )
\ge 1 - \alpha \ \  \text{for all} \ \ (\theta, \sigma).
\end{equation*}
In particular,
\begin{equation*}
P_{\theta_n, \sigma} \big ( \{ \theta_n \in D_n(\bar{Y}_n, \hat{\Sigma}_n) \} \cap A_n \big ) +
P_{\theta_n, \sigma} \big ( \{ \theta_n \in D_n(\bar{Y}_n, \hat{\Sigma}_n) \} \cap A_n^c \big )
\ge 1 - \alpha \ \ \text{for all} \ \ \sigma.
\end{equation*}
Define the event $B_n = \big \{ u_n(\bar{Y}_n, \hat{\Sigma}_n) \ge \theta_n \big \}$.
When the event $A_n$ occurs, $\ell_n(\bar{Y}_n, \hat{\Sigma}_n) \le 0$ and so
\begin{equation*}
P_{\theta_n, \sigma} \big ( \{ \theta_n \in D_n(\bar{Y}_n, \hat{\Sigma}_n) \} \cap A_n \big ) =
P_{\theta_n, \sigma} \big ( B_n \cap A_n \big ) \ \ \text{for all} \ \ \sigma.
\end{equation*}
Thus, for each $\sigma \in (0, \infty)$,
\begin{align*}
P_{\theta_n, \sigma} \big ( B_n \cap A_n \big ) &\ge 1 - \alpha
- P_{\theta_n, \sigma} \big ( \{ \theta_n \in D_n(\bar{Y}_n, \hat{\Sigma}_n) \} \cap A_n^c \big ) \\
&\ge 1 - \alpha
- P_{\theta_n, \sigma} \big ( A_n^c \big ).
\end{align*}

\begin{lemma}

For each $\sigma \in (0, \infty)$, $P_{\theta_n, \sigma} \big ( A_n^c \big ) \rightarrow 0$ as $n \rightarrow \infty$.

\end{lemma}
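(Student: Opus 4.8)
The plan is to show that the random threshold $\hat{\Sigma}_n \eta_n$ is, with probability tending to $1$, comfortably larger than $|\bar{Y}_n|$. The point is that the mean $\theta_n = \sigma \eta_n/2$ sits at only half the scale of the threshold, while the fluctuations of $\bar{Y}_n$ about $\theta_n$ are of order $\sigma/\sqrt{n}$, which is negligible compared with $\sigma \eta_n$ because $\sqrt{n}\,\eta_n \rightarrow \infty$. So after rescaling, $|\bar{Y}_n|$ falls short of the threshold by an amount that diverges in standard-deviation units.

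First I would record the consistency of $\hat{\Sigma}_n$. Since $\hat{\Sigma}_n^2$ depends on the data only through the deviations $Y_i - \bar{Y}_n$, its distribution under $P_{\theta_n,\sigma}$ does not depend on $\theta_n$: indeed $(n-1)\hat{\Sigma}_n^2/\sigma^2 \sim \chi^2_{n-1}$, so $\hat{\Sigma}_n^2 = \sigma^2\,\chi^2_{n-1}/(n-1) \rightarrow \sigma^2$ in probability (the mean is $1$ and the variance $2/(n-1) \rightarrow 0$, so Chebyshev applies), whence $\hat{\Sigma}_n \rightarrow \sigma$ in probability by the continuous mapping theorem.

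Next, fix any $\epsilon \in (0,1/2)$ and define the ``good'' event $G_n = \{\hat{\Sigma}_n \ge (1-\epsilon)\sigma\}$; by the consistency just noted, $P_{\theta_n,\sigma}(G_n^c) \rightarrow 0$. On $G_n$ the threshold satisfies $\hat{\Sigma}_n \eta_n \ge (1-\epsilon)\sigma \eta_n$, so $A_n^c \cap G_n \subseteq \{|\bar{Y}_n| > (1-\epsilon)\sigma \eta_n\}$ and hence $P_{\theta_n,\sigma}(A_n^c) \le P_{\theta_n,\sigma}\big(|\bar{Y}_n| > (1-\epsilon)\sigma \eta_n\big) + P_{\theta_n,\sigma}(G_n^c)$. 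The first term I would control by standardizing: writing $W_n = \sqrt{n}(\bar{Y}_n - \theta_n)/\sigma \sim N(0,1)$, the event $\bar{Y}_n > (1-\epsilon)\sigma \eta_n$ becomes $W_n > \sqrt{n}\,\eta_n(1/2 - \epsilon)$ and the event $\bar{Y}_n < -(1-\epsilon)\sigma \eta_n$ becomes $W_n < -\sqrt{n}\,\eta_n(3/2 - \epsilon)$. Because $\epsilon < 1/2$, both thresholds diverge in magnitude as $\sqrt{n}\,\eta_n \rightarrow \infty$, so each standard-normal tail probability tends to $0$; combined with $P_{\theta_n,\sigma}(G_n^c) \rightarrow 0$ this yields $P_{\theta_n,\sigma}(A_n^c) \rightarrow 0$.

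The only delicate point is the randomness of the threshold $\hat{\Sigma}_n \eta_n$; the device of splitting on $G_n$ reduces the problem to a deterministic threshold, after which the conclusion is a routine Gaussian tail estimate. (One could instead exploit the independence of $\bar{Y}_n$ and $\hat{\Sigma}_n$ in the normal model and condition on $\hat{\Sigma}_n$, but the splitting argument is cleaner.) I would also emphasise that the constant $1/2$ in $\theta_n = \sigma \eta_n/2$ is exactly what produces the strictly positive gap $1/2 - \epsilon$ in the standardized upper threshold; if $\theta_n$ were at the full scale $\sigma \eta_n$ this gap would close and the argument would fail.
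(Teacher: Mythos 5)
Your proof is correct and follows essentially the same route as the paper: truncate on the event that $\hat{\Sigma}_n/\sigma$ is bounded below by a constant less than $1$, then reduce to a standard Gaussian tail bound whose standardized threshold diverges because $\sqrt{n}\,\eta_n \rightarrow \infty$ and the mean $\theta_n$ sits at half the threshold scale. The only cosmetic difference is that the paper exploits the independence of $\bar{Y}_n$ and $\hat{\Sigma}_n$ to factor the probability of the intersection of the two good events, whereas you use a union bound on the complements, which works just as well.
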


\begin{proof}

Fix $\sigma \in (0, \infty)$. It is sufficient to prove that
$P_{\theta_n, \sigma} \big ( A_n \big ) \rightarrow 1$ as $n \rightarrow \infty$. Now
\begin{equation*}
A_n = \left \{ |X_n| \le \frac{\hat{\Sigma}_n}{\sigma} \sqrt{n} \, \eta_n \right \}
\end{equation*}
where $X_n = \sqrt{n} \, \bar{Y}_n /\sigma$. Note that $X_n \sim N(\sqrt{n} \, \eta_n / 2, 1)$.
Observe that
\begin{equation*}
\left \{ \frac{\hat{\Sigma}_n}{\sigma} > \frac{3}{4} \right \} \cap
\left \{ \left | X_n - \frac{1}{2} \sqrt{n} \, \eta_n \right | \le \frac{1}{4} \sqrt{n} \, \eta_n \right \}
\subset A_n.
\end{equation*}
Thus
\begin{align*}
P_{\theta_n, \sigma} (A_n) &\ge
P_{\theta_n, \sigma} \left ( \left \{ \frac{\hat{\Sigma}_n}{\sigma} > \frac{3}{4} \right \} \cap
\left \{ \left | X_n - \frac{1}{2} \sqrt{n} \, \eta_n \right | \le \frac{1}{4} \sqrt{n} \, \eta_n \right \} \right ) \\
&= P_{\theta_n, \sigma} \left (  \frac{\hat{\Sigma}_n}{\sigma} > \frac{3}{4} \right )
P_{\theta_n, \sigma} \left (
\left | X_n - \frac{1}{2} \sqrt{n} \, \eta_n \right | \le \frac{1}{4} \sqrt{n} \, \eta_n \right )
\end{align*}
and the right-hand-side converges to 1 as $n \rightarrow \infty$.

\end{proof}
\noindent Also, when the event $B_n \cap A_n$ occurs, $\ell_n(\bar{Y}_n, \hat{\Sigma}_n) \le 0$
and $u_n(\bar{Y}_n, \hat{\Sigma}_n) \ge \theta_n$, so that
$u_n(\bar{Y}_n, \hat{\Sigma}_n) - \ell_n(\bar{Y}_n, \hat{\Sigma}_n) \ge \theta_n$. Hence,
\begin{equation*}
E_{\theta_n, \sigma} \big (\text{length of } D_n(\bar{Y}_n, \hat{\Sigma}_n) \big) \ge
P_{\theta_n, \sigma} \big ( B_n \cap A_n \big ) \, \theta_n.
\end{equation*}
Thus, for each $\sigma \in (0, \infty)$,
\begin{align*}
\frac{E_{\theta_n,\sigma} \big(\text{length of } D_n(\bar{Y}_n, \hat{\Sigma}_n) \big)}
{E_{\theta_n,\sigma} \big(\text{length of standard } 1-\alpha \text{ CI for } \theta \big)}
&\ge \frac{P_{\theta_n, \sigma} \big ( B_n \cap A_n \big ) \, \theta_n}
{2 \, t(n-1) E(\hat{\Sigma}_n)/\sqrt{n}} \\
&= \frac{P_{\theta_n, \sigma} \big ( B_n \cap A_n \big ) \, \sqrt{n} \, \eta_n}
{4 \, t(n-1) E(\hat{\Sigma}_n/\sigma)},
\end{align*}
which tends to infinity as $n \rightarrow \infty$.

\section{The motivation for considering the known error variance case}

In this appendix, we motivate the consideration of the ``known error variance case''.
{\bf We begin by supposing that the error variance $\boldsymbol{\sigma^2}$ is  unknown} and is estimated by $\hat{\sigma}^2$.
We apply the same change of
scale (by multiplying by $\sqrt{n}/\sigma$) to the parameter $\theta$, the estimator $\bar{Y}$ and the
estimators $\tilde{\Theta}_H$, $\tilde{\Theta}_S$, $\tilde{\Theta}_A$ and $\tilde{\Theta}_C$
as follows. Define $\psi = (\sqrt{n}/ \sigma) \, \theta$, $X = (\sqrt{n}/ \sigma) \, \bar{Y}$,
$\tau_n = \sqrt{n} \eta_n$ and
$R = \hat{\Sigma}/\sigma$. Note that $X$ and $R$ are independent random variables and that $X \sim N(\psi,1)$.
Also define
\begin{align*}
\tilde{\Psi}_H &= \frac{\sqrt{n}}{\sigma} \tilde{\Theta}_H  =
\begin{cases}
0 &\text{if} \ \ |X| \le R \tau_n \\
X  &\text{if} \ \ |X| > R \tau_n
\end{cases} \\
\\
\tilde{\Psi}_S &= \frac{\sqrt{n}}{\sigma} \tilde{\Theta}_S  =
\begin{cases}
-\max \{|X|-R \tau_n, 0 \} &\text{if} \ \ X < 0 \\
0  &\text{if} \ \ X=0 \\
\max \{|X|-R \tau_n, 0 \} &\text{if} \ \ X > 0
\end{cases} \\
\\
\tilde{\Psi}_A &= \frac{\sqrt{n}}{\sigma} \tilde{\Theta}_A  =
\begin{cases}
0 &\text{if} \ \ |X| \le R \tau_n \\
X - \displaystyle{\frac{R^2 \tau_n^2}{X}} &\text{if} \ \ |X| > R \tau_n
\end{cases} \\
\\
\tilde{\Psi}_C &= \frac{\sqrt{n}}{\sigma} \tilde{\Theta}_C  =
\begin{cases}
\text{sign}(X) (|X| - R \tau_n)_+ &\text{if} \ \ |X| \le 2 R \tau_n \\
\big((a-1) X -\text{sign}(X) \, a \, R \, \tau_n \big)/(a-2)  &\text{if} \ \ 2 R \tau_n < |X| \le a R \tau_n \\
X &\text{if} \ \ |X| > a R \tau_n
\end{cases}
\end{align*}
These are not estimators of $\psi$ since they depend on the unknown parameter $\sigma$.
Since $R$ and $X$ are independent and $R$ converges in probability to 1 (as $n \rightarrow \infty$)
it is plausible that, for large $n$, the statistical properties of $\tilde{\Psi}_H$, $\tilde{\Psi}_S$, $\tilde{\Psi}_A$
and $\tilde{\Psi}_C$ are well-approximated by these properties of the corresponding quantities:
\begin{align*}
\hat{\Psi}_H &=
\begin{cases}
0 &\text{if} \ \ |X| \le \tau_n \\
X  &\text{if} \ \ |X| > \tau_n
\end{cases} \\
\\
\hat{\Psi}_S &=
\begin{cases}
-\max \{|X|- \tau_n, 0 \} &\text{if} \ \ X < 0 \\
0  &\text{if} \ \ X=0 \\
\max \{|X|- \tau_n, 0 \} &\text{if} \ \ X > 0
\end{cases} \\
\\
\hat{\Psi}_A &=
\begin{cases}
0 &\text{if} \ \ |X| \le \tau_n \\
X - \displaystyle{\frac{\tau_n^2}{X}} &\text{if} \ \ |X| > \tau_n
\end{cases} \\
\\
\hat{\Psi}_C &=
\begin{cases}
\text{sign}(X) (|X| -  \tau_n)_+ &\text{if} \ \ |X| \le 2  \tau_n \\
\big((a-1) X -\text{sign}(X) a \tau_n \big)/(a-2)  &\text{if} \ \ 2 \tau_n < |X| \le a \tau_n \\
X &\text{if} \ \ |X| > a \tau_n
\end{cases}
\end{align*}
Note that, conveniently, the statistical properties of these quantities depend only on the parameter $\psi$ and
not on the parameter $\sigma$.

Farchione and Kabaila \cite{Farchione_Kabaila} consider the following confidence interval for $\theta$:
\begin{equation*}
D = \left [ -\frac{\hat{\Sigma}}{\sqrt{n}} \, b \left ( -\frac{\bar{Y}}{\hat{\Sigma}/\sqrt{n}} \right ), \,
\frac{\hat{\Sigma}}{\sqrt{n}} \, b \left ( \frac{\bar{Y}}{\hat{\Sigma}/\sqrt{n}} \right ) \right ]
\end{equation*}
where the function $b$ must satisfy the constraint that $b(x) \ge -b(-x)$ for all $x \in \mathbb{R}$.
This constraint is required to ensure that the upper endpoint of this confidence interval is never less
than the lower endpoint. This particular form of confidence interval is motivated by some invariance arguments.
The standard $1-\alpha$ confidence interval for $\theta$ is
\begin{equation*}
\big [ \bar{Y} - t(n-1) \hat{\Sigma} / \sqrt{n}, \, \bar{Y} + t(n-1) \hat{\Sigma} / \sqrt{n} \big ]
\end{equation*}
where the quantile $t(m)$ is defined by the requirement that $P(-t(m) \le T \le t(m)) = 1 - \alpha$
for $T \sim t_m$. Note that this confidence interval can be expressed in the form $D$.

Now scale the confidence interval $D$ by the same scaling factor as before, to obtain
\begin{equation*}
D^* = \frac{\sqrt{n}}{\sigma} D = \big [ -R \, b(-X/R), \, R \, b(X/R) \big ].
\end{equation*}
Note that $\tilde{\Theta}_H \in D$ is equivalent to $\tilde{\Psi}_H \in D^*$.
Similar statements apply to the other estimators $\tilde{\Theta}_S$,
$\tilde{\Theta}_A$ and $\tilde{\Theta}_C$. Also note that
$D^*$ is not a confidence interval for $\psi$, since it depends on the unknown parameter $\sigma$. However,
\begin{equation*}
P_{\theta, \sigma} (\theta \in D) = P_{\psi} (\psi \in D^*),
\end{equation*}
so that
\begin{equation*}
\inf_{\theta, \sigma} P_{\theta, \sigma} (\theta \in D) = \inf_{\psi} P_{\psi} (\psi \in D^*).
\end{equation*}
Also,
\begin{equation*}
\frac{E_{\theta, \sigma} (\text{length of } D)}{E_{\theta, \sigma} (\text{length of standard } 1-\alpha \text{ CI for } \theta)}
= \frac{E_{\psi} ( \text{length of } D^*)}{2 t(n-1) E(R)}.
\end{equation*}

Since $R$ and $X$ are independent and $R$ converges in probability to 1 (as $n \rightarrow \infty$)
it is plausible that, for large $n$, the statistical properties of $D^*$ are well-approximated by
the corresponding properties of $C^* = [-b(-X), b(X)]$. In fact, the following result holds.

\begin{theorem}

Suppose that the function $b$ satisfies the following assumptions.

\begin{enumerate}

\item[(A1)] The function $b$ is continuous and strictly increasing. Also, the function $b^{-1}$ is uniformly
continuous.

\item[(A2)] Define $e(x) = b(x) - x - z$, where the quantile $z$ is defined by the requirement that $P(-z \le Z \le z) = 1 - \alpha$
for $Z \sim N(0,1)$.

\begin{enumerate}

\item[(i)] $e(x) = 0$ for all $|x| \ge q$, where $q$ is a specified positive number.

\item[(ii)] There exists $L$, satisfying $0 < L < \infty$, such that $|e(x) - e(y)| \le L |x-y|$ for all $x$ and $y$.

\end{enumerate}

\end{enumerate}

Then

\begin{enumerate}

\item[(R1)] $\displaystyle{\sup_{\psi} \big|P_{\psi} (\psi \in C^*) -  P_{\psi} (\psi \in D^*) \big| \rightarrow 0}$
as $n \rightarrow \infty$.

\item[(R2)] $\displaystyle{\sup_{\psi} \left |
\frac{E_{\psi} (\text{length of } C^*)}{2 z} - \frac{E_{\psi} (\text{length of } D^*)}{2 t(n-1) E(R)}  \right |
\rightarrow 0}$ as $n \rightarrow \infty$.

\end{enumerate}

\end{theorem}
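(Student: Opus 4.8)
The plan is to reparametrise everything through $b^{-1}$ so that both coverage probabilities and both expected lengths become explicit functionals of $e$, and then to exploit the compact support of $e$ to obtain the uniformity in $\psi$. Write $\Phi$ and $\phi$ for the $N(0,1)$ distribution function and density, and $E_R$ for expectation over $R$. By (A1) the map $b$ is strictly increasing, and from (A2) one has $b^{-1}(y) = y - z + f(y)$ with $f(y) := -e\big(b^{-1}(y)\big)$. Inverting the defining inequalities gives $\psi \in C^* \iff b^{-1}(\psi) \le X \le -b^{-1}(-\psi)$ and $\psi \in D^* \iff R\,b^{-1}(\psi/R) \le X \le -R\,b^{-1}(-\psi/R)$, so that, since $X \sim N(\psi,1)$,
\begin{align*}
P_\psi(\psi \in C^*) &= \Phi\big(z - f(-\psi)\big) - \Phi\big(f(\psi) - z\big),\\
P_\psi(\psi \in D^*) &= E_R\Big[\Phi\big(Rz - R f(-\psi/R)\big) - \Phi\big(R f(\psi/R) - Rz\big)\Big].
\end{align*}
Similarly, since $b(x) + b(-x) = 2z + e(x) + e(-x)$, the lengths of $C^*$ and $D^*$ are $2z + S(X)$ and $2Rz + R\,S(X/R)$ with $S(x) := e(x) + e(-x)$. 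The two functions $f$ and $S$ inherit from (A1)--(A2) the three properties I will use repeatedly: each is bounded (by $Lq$, resp.\ $2Lq$), uniformly continuous, and supported in a fixed compact set ($[z-q,z+q]$ for $f$, $[-q,q]$ for $S$).

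For (R1) I would bound $\sup_\psi \big|P_\psi(\psi \in C^*) - P_\psi(\psi \in D^*)\big|$ by $E_R$ of the two matched $\Phi$-differences. Since $\Phi$ is $(2\pi)^{-1/2}$-Lipschitz, each integrand is at most a constant times $|f(\psi) - R f(\psi/R)| + |R-1|z$ (and its $-\psi$ analogue). I would split the expectation over $\{|R-1| \le \delta\}$ and its complement. On the complement the integrand is at most $1$, so its contribution is at most $P(|R-1| > \delta)$, which is free of $\psi$ and tends to $0$ because $R \to 1$ in probability. On $\{|R-1| \le \delta\}$ I would write $|f(\psi) - R f(\psi/R)| \le |f(\psi) - f(\psi/R)| + |1-R|\,\|f\|_\infty$; here the compact support of $f$ forces this quantity to vanish unless $\psi$ lies in a fixed bounded set, and on that set $|\psi/R - \psi| \le |\psi|\,|R-1|/R$ is small, so uniform continuity of $f$ makes the term uniformly small once $\delta$ is small. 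Choosing $\delta$ and then $n$ appropriately gives the claim.

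For (R2) I would expand the target difference as
\begin{equation*}
\Big(1 - \tfrac{z}{t(n-1)}\Big) + \frac{A(\psi) - B(\psi)}{2z} + B(\psi)\Big(\frac{1}{2z} - \frac{1}{2\,t(n-1)\,E(R)}\Big),
\end{equation*}
where $A(\psi) = E_\psi[S(X)]$ and $B(\psi) = E\big[R\,S(X/R)\big]$. The first bracket is free of $\psi$ and tends to $0$ because $t(n-1) \to z$; since $|B(\psi)| \le 2Lq\,E(R)$ is bounded and $E(R) \to 1$, the third term tends to $0$ uniformly in $\psi$. The heart of the matter is $\sup_\psi |A(\psi) - B(\psi)|$. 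Conditioning on $R$ and writing both quantities as integrals against $\phi(\cdot - \psi)$ gives $|A(\psi) - B(\psi)| \le E_R\!\int |S(x) - R\,S(x/R)|\,\phi(x-\psi)\,dx \le (2\pi)^{-1/2}\,E_R\!\int |S(x) - R\,S(x/R)|\,dx$, where the bound $\phi \le (2\pi)^{-1/2}$ removes the dependence on $\psi$ entirely. On $\{|R-1| \le 1/2\}$ the compact support of $S$ confines the integral to a fixed bounded interval, and the Lipschitz bound $|S(x) - R\,S(x/R)| \le 2L|x|\,|R-1|/R + |1-R|\,\|S\|_\infty$ yields $\int |S(x) - R\,S(x/R)|\,dx \le C\,|R-1|$; on $\{|R-1| > 1/2\}$ the integral is at most $(1 + R^2)\,\|S\|_1$. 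Hence $\sup_\psi |A(\psi)-B(\psi)| \le \tfrac{C}{\sqrt{2\pi}} E|R-1| + \tfrac{1}{\sqrt{2\pi}} E\big[(1+R^2)\,\|S\|_1\,\mathbf{1}(|R-1|>1/2)\big]$, and both terms tend to $0$.

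The structural obstacle, present in both parts, is the requirement of uniformity over the unbounded range $\psi \in \mathbb{R}$: neither $R \to 1$ in probability nor uniform continuity alone controls $f(\psi/R) - f(\psi)$ or $S(x/R) - S(x)$ when $\psi$ (resp.\ $x$) is large. The device that defeats it is the compact support guaranteed by (A2)(i), which localises every relevant integrand to a fixed bounded set before uniform continuity is applied. The remaining inputs are the probabilistic facts $E(R) \to 1$, $E|R-1| \to 0$, and the uniform integrability of $R^2$ (so that $E[R^2\,\mathbf{1}(|R-1|>1/2)] \to 0$), all of which follow from $(n-1)R^2 \sim \chi^2_{n-1}$, whence $E[R^2] = 1$ and $\mathrm{Var}(R^2) = 2/(n-1) \to 0$.
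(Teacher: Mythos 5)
Your proof is correct, and its overall architecture --- truncate on the event $\{|R-1|\le\delta\}$, use the compact support of $e$ to localise every relevant quantity to a fixed bounded set, apply (uniform) continuity there, and finish with the moment facts $E(R)\to1$, $E|R-1|\to0$, $E(R^2)=1$ --- is the same as the paper's, but the execution differs at two points. For \emph{(R1)} the paper stays with the events $\{\psi>b(X)\}$ and $\{\psi>R\,b(X/R)\}$, rewrites $R\,b(X/R)=b(X)+(R-1)(z+e(X/R))+(e(X/R)-e(X))$, sandwiches the perturbed probability between $P_\psi\big(\psi>b(X)\pm M\delta\big)$ up to $P(|R-1|>\delta)$, and only then invokes the uniform continuity of $b^{-1}$ together with $X\sim N(\psi,1)$; you instead invert through $b^{-1}$ at the outset, obtain closed-form expressions in $\Phi$ and $f=-e\circ b^{-1}$, and let the global Lipschitz bound on $\Phi$ deliver the uniformity in $\psi$ directly --- a mildly cleaner route that makes the role of (A1) (uniform continuity of $f$ via $b^{-1}$) more transparent. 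For \emph{(R2)} your three-term decomposition is algebraically equivalent to the paper's; the only real difference is in the central term $\sup_\psi|A(\psi)-B(\psi)|$, where the paper changes variables $y=x/r$, uses $E(R^2)=1$ to align the two integrals, and applies the mean-value theorem to $\phi(rx-\psi)-\phi(x-\psi)$, arriving at the bound $4LKq^3E\big(|R-1|R^2\big)$, whereas you bound $\phi$ by $(2\pi)^{-1/2}$ and place the increment on $S(x)-R\,S(x/R)$ instead. Both variants rest on the same Lipschitz-plus-compact-support mechanism and the same $\chi^2$ moment computations, so neither buys extra generality; yours is somewhat more explicit about why each limit is uniform in $\psi$, which the paper leaves partly to ``it may be shown''.
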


\begin{proof}

We prove the result {\it (R1)} as follows. Note that
\begin{align*}
P_{\psi} (\psi \in C^*) &= 1 - P_{\psi} (\psi < - b(-X)) - P_{\psi} (\psi > b(X)) \\
P_{\psi} (\psi \in D^*) &= 1 - P_{\psi} (\psi < - R \, b(-X/R)) - P_{\psi} (\psi > R \, b(X/R)).
\end{align*}
It is sufficient to prove that
\begin{align}
\label{first_limit}
&\sup_{\psi} \big | P_{\psi} \big(\psi < - b(-X) \big) - P_{\psi} \big(\psi < - R \, b(-X/R) \big) \big | \rightarrow 0 \quad \text{as } n \rightarrow \infty \\
\label{second_limit}
&\sup_{\psi} \big | P_{\psi} \big(\psi > b(X) \big) - P_{\psi} \big(\psi > R \, b(X/R) \big) \big | \rightarrow 0 \quad \text{as } n \rightarrow \infty
\end{align}
The proofs of \eqref{first_limit} and \eqref{second_limit} are very similar. For the sake of brevity, we provide
only the proof of \eqref{second_limit}.
Suppose that $\epsilon > 0$ is given. We need to prove that there exists $N < \infty$ such that
\begin{equation}
\label{result_to_prove}
\sup_{\psi} \big | P_{\psi} (\psi > b(X)) - P_{\psi} (\psi > R \, b(X/R)) \big | < \epsilon \quad \text{for all } n > N.
\end{equation}

Let $\delta$ ($0<\delta<1/2$) be given. Using the law of total probability, it may be shown that
\begin{align}
\notag
&\big | P_{\psi} \big(\psi > R \, b(X/R) \big) - P_{\psi} \big(\psi > b(X) \big) \big | \\
\label{initial}
&\le \big | P_{\psi} \big(\psi > R \, b(X/R), \, |R-1| \le \delta \big) - P_{\psi} \big(\psi > b(X) \big) \big | + P(|R-1| > \delta).
\end{align}
Obviously,
\begin{align}
\notag
&P_{\psi} \big(\psi > R \, b(X/R), \, |R-1| \le \delta \big)  \\
\label{endgame}
&= P_{\psi} \big( \psi > b(X) + (R-1)(z + e(X/R)) + (e(X/R)-e(X)), |R-1| \le \delta \big).
\end{align}
It may be shown that if $|R-1| \le \delta$ then there exists $M < \infty$ (where $M$ does not depend on $\delta$)
such that $\big|(R-1)(z + e(X/R))+ (e(X/R)-e(X)) \big| \le M \delta$. Thus
\begin{equation*}
P_{\psi} \big( \psi > b(X) + M \delta, |R-1| \le \delta \big) \le \eqref{endgame} \le P_{\psi} \big( \psi > b(X) - M \delta \big).
\end{equation*}
Using the law of total probability, it may be shown that
\begin{equation*}
P_{\psi} \big( \psi > b(X) + M \delta, |R-1| \le \delta \big) \le \eqref{endgame} \ge P_{\psi} \big( \psi > b(X) + M \delta \big)
- P(|R-1| > \delta).
\end{equation*}
Thus
\begin{equation*}
P_{\psi} \big( \psi > b(X) + M \delta \big) - P(|R-1| > \delta) \le \eqref{endgame} \le P_{\psi} \big( \psi > b(X) - M \delta \big).
\end{equation*}
In other words,
\begin{equation*}
P_{\psi} \big( X < b^{-1}(\psi - M \delta)\big) - P(|R-1| > \delta)
\le \eqref{endgame} \le P_{\psi} \big( X < b^{-1}(\psi + M \delta)\big).
\end{equation*}
Note that $P_{\psi} (\psi > b(X)) = P_{\psi} (X < b^{-1}(\psi))$.
Using the uniform continuity of $b^{-1}$ and the fact that $X \sim N(\psi,1)$, it may be shown
that there exists $\delta$ ($0<\delta<1/2$) such that
\begin{align*}
&\sup_{\psi} \big |P_{\psi} \big( X < b^{-1}(\psi - M \delta) \big) - P_{\psi} \big( X < b^{-1}(\psi) \big) \big | < \epsilon/2 \\
&\sup_{\psi} \big |P_{\psi} \big( X < b^{-1}(\psi + M \delta) \big) - P_{\psi} \big( X < b^{-1}(\psi) \big) \big | < \epsilon/2.
\end{align*}
Choose $\delta$ ($0<\delta<1/2$) such that these two inequalities are satisfied.
Therefore, $\big|\eqref{endgame} - P_{\psi} ( \psi > b(X)) \big| < P(|R-1| > \delta) + \epsilon/2$.
It follows from \eqref{initial} that
$\big | P_{\psi} \big(\psi > R \, b(X/R) \big) - P_{\psi} (\psi > b(X)) \big | < 2 P(|R-1| > \delta) + \epsilon/2$.
Since $P(|R-1| > \delta) \rightarrow 0$ as $n \rightarrow \infty$, there exists $N < \infty$ such that
\eqref{result_to_prove} is satisfied. This completes the proof of the result {\it (R1)}.

We prove the result {\it (R2)} as follows. It may be shown that it is sufficient to prove that
\begin{equation*}
\sup_{\psi} \big | E_{\psi} (\text{length of } C^*) - E_{\psi} (\text{length of } D^*) \big |
\rightarrow 0 \quad \text{as } n \rightarrow \infty.
\end{equation*}
Now
\begin{align*}
&E_{\psi} (\text{length of } D^*) - E_{\psi} (\text{length of } C^*) \\
&= 2 z (E(R)-1) + \big ( E_{\psi} (\text{length of } D^*) - 2 z E(R) \big )
- \big ( E_{\psi} (\text{length of } C^*) - 2 z \big ).
\end{align*}
Hence
\begin{align*}
&\big | E_{\psi} (\text{length of } D^*) - E_{\psi} (\text{length of } C^*) \big | \\
&= 2 z |E(R)-1| + \big | E_{\psi} (\text{length of } D^*) - 2 z E(R) \big )
- \big ( E_{\psi} (\text{length of } C^*) - 2 z \big ) \big |.
\end{align*}
Since $E(R)$ does not depend on $\psi$ and $E(R) \rightarrow 1$ as $n \rightarrow \infty$, it is
sufficient to prove that
\begin{equation*}
\sup_{\psi} \big | E_{\psi} (\text{length of } D^*) - 2 z E(R) \big )
- \big ( E_{\psi} (\text{length of } C^*) - 2 z \big ) \big |
\rightarrow 0 \quad \text{as } n \rightarrow \infty.
\end{equation*}
Let $f_R$ denote the probability density function of $R$. Now
\begin{align}
\notag
&E_{\psi} (\text{length of } D^*) - 2 z E(R) \\
\notag
&= \int_0^{\infty} \int_{-\infty}^{\infty}
\left ( b \left ( \frac{x}{r} \right ) + b \left ( -\frac{x}{r} \right ) -2 z \right )
\phi(x - \psi) \, dx \, r \, f_R(r) \, dr \\
\label{intermediate_d_star}
&= \int_0^{\infty} \int_{-rq}^{rq}
\left ( b \left ( \frac{x}{r} \right ) + b \left ( -\frac{x}{r} \right ) -2 z \right )
\phi(x - \psi) \, dx \, r \, f_R(r) \, dr
\end{align}
since $b(x/r) + b(-x/r) - 2 z = 0$ for all $|x| \ge rq$. Changing the variable of integration
from $x$ to $y=x/r$, we see that \eqref{intermediate_d_star} is equal to
\begin{equation*}
\int_0^{\infty} \int_{-q}^q
\left ( b \left ( x \right ) + b \left ( -x \right ) -2 z \right )
\phi(r x - \psi) \, dx \, r^2 \, f_R(r) \, dr
\end{equation*}

Now
\begin{align}
\notag
&E_{\psi} (\text{length of } C^*) - 2 z \\
\notag
&= \int_{-\infty}^{\infty}
\left ( b \left ( x \right ) + b \left ( -x \right ) -2 z \right )
\phi(x - \psi) \, dx \\
\notag
&= \int_{-q}^q
\left ( b \left ( x \right ) + b \left ( -x \right ) -2 z \right )
\phi(x - \psi) \, dx  \ \ \ \ (\text{by } {\it (A2) (i)}) \\
\label{intermediate_c_star}
&= \int_0^{\infty} \int_{-q}^q
\left ( b \left ( x \right ) + b \left ( -x \right ) -2 z \right )
\phi(x - \psi) \, dx \, r^2 \, f_R(r) \, dr,
\end{align}
since
\begin{equation*}
\int_0^{\infty} r^2 f_R(r) dr = E(R^2) = 1.
\end{equation*}
Thus
\begin{align}
\notag
&E_{\psi} (\text{length of } D^*) - 2 z E(R)
- \big ( E_{\psi} (\text{length of } C^*) - 2 z \big ) \\
\label{intermediate_c_star}
&\int_0^{\infty} \int_{-q}^q (e(x) + e(-x)) \big ( \phi(r x - \psi) - \phi(x - \psi) \big )
\, dx \, r^2 \, f_R(r) \, dr
\end{align}
By the mean-value theorem, there exists a positive number $K < \infty$ such that
\begin{equation*}
\big | \phi(r x - \psi) - \phi(x - \psi) \big | \le K |r-1| |x| \quad
\text{for all } r \ge 0 \ \text{and } x \in \mathbb{R}.
\end{equation*}
Thus
\begin{equation*}
| \, \eqref{intermediate_c_star} \, | \le 4 L K q^3 E \big(|R-1| R^2 \big).
\end{equation*}
Note that $E \big(|R-1| R^2 \big)$ does not depend on $\psi$ and that, by the Cauchy-Schwarz inequality,
$E \big(|R-1| R^2 \big) \rightarrow 0$ as $n \rightarrow \infty$. This completes the proof of {\it (R2)}.

\end{proof}

\noindent \textbf{Thus, to study the coverage and expected length properties of the confidence interval
$\boldsymbol{D}$ for $\boldsymbol{\theta}$ when $\boldsymbol{n}$ is large,
we study the properties of $\boldsymbol{P_{\psi}(\psi \in C^*)}$ and $\boldsymbol{E_{\psi} (\text{length of } C^*)}$,
which are simply functions of $\boldsymbol{\psi}$.}

Now suppose that the ``error variance is known'' i.e. $\sigma^2$ is known. The analogues of the estimators
$\tilde{\Theta}_H$, $\tilde{\Theta}_S$, $\tilde{\Theta}_A$ and $\tilde{\Theta}_C$ are
$\hat{\Theta}_H$, $\hat{\Theta}_S$, $\hat{\Theta}_A$ and $\hat{\Theta}_C$ respectively, where
\begin{align*}
\hat{\Theta}_H &=
\begin{cases}
0 &\text{if} \ \ |\bar{Y}| \le \sigma \, \eta_n \\
\bar{Y} &\text{if} \ \ |\bar{Y}| > \sigma \, \eta_n
\end{cases} \\
\hat{\Theta}_S &=
\begin{cases}
- \max\{|\bar{Y}| - \sigma \eta_n, 0\} &\text{if } \bar{Y}<0 \\
0 &\text{if } \bar{Y}=0 \\
\max\{|\bar{Y}| - \sigma \eta_n, 0\} &\text{if } \bar{Y}>0
\end{cases} \\
\hat{\Theta}_A &=
\begin{cases}
0 &\text{if } |\bar{Y}| \le \sigma \eta_n \\
\bar{Y} - \displaystyle{\frac{\sigma^2 \eta_n^2}{\bar{Y}}} &\text{if } |\bar{Y}| > \sigma \eta_n
\end{cases} \\
\hat{\Theta}_C &=
\begin{cases}
\text{sign}(\bar{Y}) \, \big(|\bar{Y}| - \sigma \eta_n \big)_+
&\text{if } |\bar{Y}| \le 2 \sigma \eta_n \\
((a-1) \bar{Y} - \text{sign}(\bar{Y}) a \sigma \eta_n)/(a-2)
&\text{if }  2 \sigma \eta_n < |\bar{Y}| \le a \sigma \eta_n \\
\bar{Y}
&\text{if }  |\bar{Y}| > a \sigma \eta_n
\end{cases}
\end{align*}
where $a=3.7$. Also, the analogue of the confidence interval $D$ for $\theta$
is
\begin{equation*}
C = \left [ -\frac{\sigma}{\sqrt{n}} b \left ( -\frac{\bar{Y}}{\sigma/\sqrt{n}} \right ), \,
\frac{\sigma}{\sqrt{n}} b \left ( \frac{\bar{Y}}{\sigma/\sqrt{n}} \right ) \right ].
\end{equation*}
Scaling $\theta$, $\hat{\Theta}_H$, $\hat{\Theta}_S$, $\hat{\Theta}_A$, $\hat{\Theta}_C$ and $C$ by
multiplying by $\sqrt{n}/\sigma$, we obtain $\psi$, $\hat{\Psi}_H$, $\hat{\Psi}_S$, $\hat{\Psi}_A$, $\hat{\Psi}_C$ and $C^*$,
respectively. In other words, when we suppose that the ``error variance is known'', we are finding an approximation
(by the arguments stated earlier in this section) to the coverage probability and expected length properties of
$\tilde{\Theta}_H$, $\tilde{\Theta}_S$, $\tilde{\Theta}_A$, $\tilde{\Theta}_C$ and $D$
for large $n$.

\section*{Acknowledgements}
Paul Kabaila is grateful to Hannes Leeb and Benedikt P\"otscher for
some helpful discussions.

\end{document}